\documentclass[10pt, conference, letterpaper]{IEEEtran}
 
\usepackage{graphicx} 
\usepackage{amsmath,amssymb,amsfonts}
\usepackage{graphicx}
\usepackage{textcomp}
\usepackage{algorithm}
\usepackage[noend]{algpseudocode}
\usepackage{color,soul}
\usepackage{xcolor}
\usepackage{subfig}
\usepackage{url}
\usepackage{mathtools} 
\usepackage{amsfonts}
\usepackage{nicefrac}
\usepackage{bm}
\usepackage{tablefootnote}
\usepackage[table]{xcolor}
\usepackage{colortbl}
\usepackage{graphicx}
\usepackage{nicefrac}
\usepackage{enumitem}

\usepackage{amsthm}
\newtheorem{theorem}{Theorem}
\newtheorem{corollary}{Corollary}

\newtheorem{definition}{Definition}

\newtheorem{remark}{Remark}
\usepackage{subcaption}
\usepackage{amsmath}
\usepackage{graphicx}

\title{Prioritizing Microservice Requests via Retries}
\author{
\IEEEauthorblockN{
Ofek Lavi\IEEEauthorrefmark{1} \;\;
Jhonatan Tavori\IEEEauthorrefmark{4} \;\;
Hanoch Levy\IEEEauthorrefmark{1} \;\;
Anat Bremler-Barr\IEEEauthorrefmark{1} \;\;
}
\IEEEauthorblockA{\IEEEauthorrefmark{1}Tel Aviv University \quad \quad
\IEEEauthorrefmark{4}Columbia University }
}

\begin{document}

\maketitle

\begin{abstract}

Cloud services rapidly shift towards independently-managed microservices architectures. This independence may lead to temporary microservices miscoordination, resulting in overloads and rejections of valuable transactions. This weakness may further be exploited by attackers to launch overload attacks inflicting performance and cost damages.

Retry operations are a key machinery aiming at overcoming seldom rejections. However, during overload periods they: a) Fail protecting valuable transactions from rejections, and b) Trigger large-scale performance-degrading retry storms.

We propose simple modifications in retry operations, overcoming these deficiencies by transforming retries into a rejection prioritization mechanism, \textit{DiffTry} (Differential Retries). It is based on a simple, lightweight, distributed and stateless design and requires neither overload sensing nor microservices coordination.

We develop an analytical model capturing \textit{DiffTry}'s effectiveness for two services in tandem, then generalize it to networks of services. Our theoretical framework lets practitioners quantify, for a given network, the resulting prioritization and rejection rates, and yields formal guarantees that \textit{DiffTry} substantially reduces the impact of both incidental and adversarial orchestrated overloads. We validate these predictions in simulation and on a live Kubernetes deployment, where \textit{DiffTry} is a simple drop-in service-mesh plugin requiring no application changes; the measured rejection closely matches the analysis.

\end{abstract}

\section{Introduction}

Cloud service providers have largely converged on microservice architectures, decomposing monolithic applications into hundreds, often thousands, of loosely coupled services \cite{uber_microservices,netflix_microservices}. 
This paradigm 
allows each component 
(microservice) 
to scale elastically in response to demand,
implementing capacity scaling, retries, timeouts, and circuit breakers autonomously.

While these mechanisms achieve high  microservice independence and simplicity in design,   their coordination is challenging and their operations can get miscoordinated. During periods of  miscoordination, in particular, miscoordination of service capacity,  some microservices may become overloaded due to their service capacity not up-scaled to  the load generated by upstream services. Such overload periods may result in significant losses of transactions, due to their  blocking/rejection by the downstream overloaded microservice. 

The negative economic effects of such rejection periods are further compounded when they affect high-value (high-priority) transactions. These may include premium transactions (such as a customer with a loaded cart, or a preferred customer), mission critical transactions and others.

Two main overload-prevention methodologies are used in the literature and in practice: 
(i) local auto-scaling approaches (e.g., \cite{kubernetes_autoscaling}), and  
(ii) network-wide centralized approaches (e.g., TopFull \cite{topfull2024}, which offers network-wide source throttling).
Local auto-scaling reacts to overload by increasing resources at individual nodes. While it can respond quickly, it lacks a global view and therefore cannot address traffic miscoordination across services. Centralized approaches have such a global view, but require collecting network-wide information and are therefore slower to react. The resulting time gap can create temporary periods of miscoordination and overload, during which high-value transactions may still be rejected.
Malicious users can exploit this weakness by injecting spikes of cheap low-value transactions, overloading critical microservices and causing mass rejections of innocent high-value traffic, thereby inflicting economic damage on the service \cite{yoyoinfocom2017}.

In this paper, we propose \textit{DiffTry} as a mechanism for service differentiation and load control using retries.

\subsection{\textit{DiffTry}: Differential Retries}

\textbf{\textit{DiffTry}} (Differential Retries) is a simple and stateless mechanism for differential operation of retries across transaction classes, yielding differentiation (and prioritization) in their rejection rates. 
In a nutshell, \textit{DiffTry} intervenes right at the point where rejections occur, protecting high-priority transactions during overload by assigning class-specific retry limits.
The key idea is to assign a different number of retry attempt limits to different classes or tiers of requests. For example, a premium user may be allowed up to three attempts, while a "bypassing user"  may be limited to one. These limits can be configured per route, user tier or traffic class (e.g., with priority carried in an HTTP header), and enforced locally at the retry decision point (e.g., the sender API).

By embedding the prioritization logic directly into the retry process, \textit{DiffTry} eliminates the need for explicit coordination or tight coupling between services. It effectively transforms retries into a distributed rejection-prioritization control, enabling systems to both prioritize critical flows and suppress retry amplification in a local, coordination-free, scalable manner.

During attacks, this may have a particularly critical effect, as it protects highly valued transactions from large volumes of attacking requests (naturally classified as low-priority tasks).
Furthermore, without requiring explicit overload detection or sensor, the mechanism has an effect only under high-load or overload conditions and remains effectively silent otherwise.

To illustrate \textit{DiffTry} under overload, consider a front-end Service~A and a downstream Service~B (Fig.~\ref{fig:DiffTry-comparison}). Requests are labeled by priority, either \textit{High} (e.g., premium or latency-sensitive users) or \textit{Low} (e.g., a bypassing user or batch job), and Service~A retries them after timeouts or errors.

\begin{figure}[t]
    \centering
    \subfloat[
    Same number of retry attempts: rejections of both high- and low-priority.
    ]{
        \includegraphics[width=0.95\linewidth]{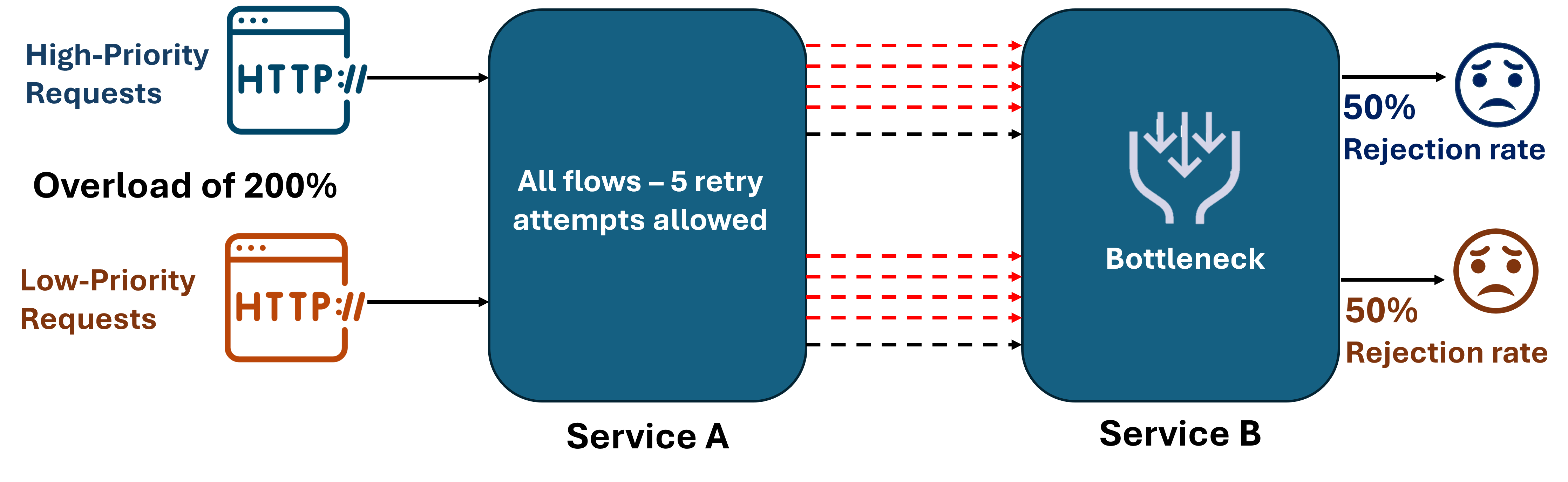}
        \label{fig:noDiffTry}
    }
    \hspace{0.9cm}
    \subfloat[ 
    \textit{DiffTry}: differentiate retry limits, High-priority enjoy low rejection rates.
    ]{
        \includegraphics[width=0.95\linewidth]{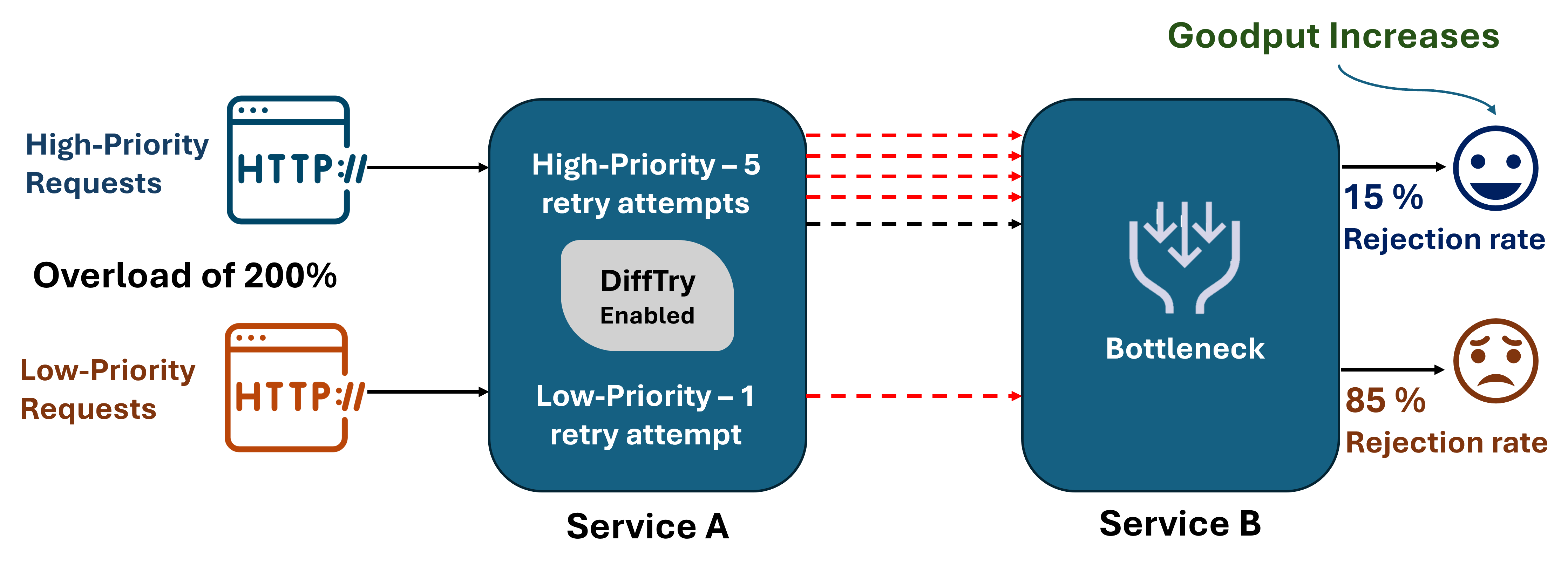}
        \label{fig:DiffTry}
    }
    \caption{\textit{DiffTry} impact on retry behavior under overload.}
    \label{fig:DiffTry-comparison}
\end{figure}

In classical retry mechanisms (a), when Service~B becomes overloaded due to a traffic spike or scaling lag, Service~A  retries failed requests, ignoring  priority. Thus, transactions are rejected and delayed regardless of value. Although overload is often driven by low-priority traffic, and can be amplified by low-priority attack traffic, the main economic damage comes from degraded high-priority requests.
In contrast, with \textit{DiffTry} (b), high-priority requests receive more retry attempts, increasing their success probability and reducing their rejection rate to around 15\%. This selective persistence significantly improves their \textit{goodput}: even when both classes contribute 50\% of the input load, high-priority transactions account for about 85\% of successful completions, corresponding to a throughput improvement of about 70\%. This tradeoff can be further enhanced by adding retries to high-priority requests.

\subsection{Main Contributions:  \textit{DiffTry} Analysis and Results}
\label{subsec:results}

Our paper focuses on establishing and analyzing the behavior of \textit{DiffTry}. This should serve designers in (i) Gaining  trust in this new machinery, (ii) Using the analysis to optimize the operations of \textit{DiffTry} and plan the system 
performance.

We start by presenting the \textit{DiffTry} mechanism, properties, and model in Sec. \ref{sec:mechanism}. Then, 
we analyze a sub-system in which the upstream Service A sends traffic to the downstream Service B (Sec. \ref{sec:analysis}). We first focus on the overload region, and using a fluid-model approximation we derive the per-stream rejections and throughput. 
We then consider the no-overload region, where rejection depends on the underlying queueing model. We show how queueing-specific blocking expressions can be plugged into our retry analysis, to derive the per-stream rejections; we demonstrate this on the M/M/1/m model.
Simulation results demonstrate the model accuracy in Sec. \ref{sec:evaluation}.  

Having established the analysis for a single overloaded node  we turn to deal with networks (Sec. \ref{sec:network}). 
This is motivated by real microservice systems, which often contain tens or hundreds of services, e.g., Uber~\cite{uber_microservices} and Netflix~\cite{netflix_microservices}.
There are two key motivating questions behind this analysis: 1) Whether the plausible behavior (low rejections for high-priority transactions)  derived  for a single node would hold under an arbitrary network. In particular, could  high-priority class  enjoy low rejections in one node, but  affect the traffic in a way that it will suffer high rejection rate  on  other (downstream) nodes. 2) 
Approximate the network rejections under \textit{DiffTry}.  

We adopt the well known open-network model \cite{kleinrock1975queueing1}, combined with flow classes/priorities,  to model the traffic in the network. We then extend it to model \textit{DiffTry}.
By incorporating the rejection equations we generate a set of equations whose solution 
(computed in a simple iterative approach) yields the rejection rate of each priority class in each node. This allows deriving the end-to-end rejection rates. 
We further prove that the iterative approach is always guaranteed to converge to a solution and derive network rejection probabilities.
We also investigate whether the solution is guaranteed to be unique (Sec. \ref{sec:uniqueness}). 
We derive a set of conditions under which uniqueness is guaranteed. In particular it holds for a wide set of applications where the transaction flow graph is represented by a DAG (examples include Online Boutique \cite{google_online_boutique} and WeChat \cite{DAGOR}).  For other cases, we prove uniqueness for some cases, but the general theoretical problem remains open in the context of this work; for those cases, we conduct an extensive set of experiments, all of which yield a unique solution.  

We establish monotonicity results (Sec.~\ref{sec:dominance}) showing that network-wide class-level success rates increase monotonically with retry limits. This establishes the effectiveness of DiffTry as well as allowing establishing performance lower bounds.

Lastly, in Sec.~\ref{sec:exp} we study \textit{DiffTry}'s network-level behavior
numerically and validate it on a live Kubernetes deployment.
There, \textit{DiffTry} is a \emph{drop-in service-mesh plugin} that needs no
application-code changes, and the measured rejection closely matches our
analysis, confirming that the local mechanism composes into network-wide
prioritization on a real system.
Concluding remarks are given in Sec. \ref{sec:conclusion}. 

\section{Related Work}\label{sec:related_work}
Traffic and overload control are challenging in microservice systems due to miscoordination among independently auto-scaled services. Prior work shows that uncoordinated scaling decisions can lead to instability and excessive cost under burst workloads \cite{luo2021characterizing,xie2024pbscaler}.
Several recent studies address these challenges using centralized control. For example, \cite{topfull2024} performs top-down overload control at the application ingress, \cite{wang2024autothrottle} uses an application-wide controller to set global performance and resource targets, and \cite{lim2024opportunities} relies on centralized request routing and traffic engineering across services and replicas.

In cloud systems, retry mechanisms are tied to service communication protocols, including HTTP \cite{yu2019survey} and RPC \cite{nikita2021}. Major cloud providers integrate automated retry logic directly into client and service libraries \cite{mathew2014overview, azure_retries}, and service mesh platforms provide built-in retry handling at the infrastructure layer \cite{istio_traffic_managment, linkerd_traffic_managment, saleh2022empirical, ashok2021leveraging}; all with a constant limit per route.

Traditional priority mechanisms (e.g., \cite{kleinrock1975queueing1}) are offered by cloud providers  \cite{googlecloud_request_priorities,lunar_priority_queue_api}, but their capabilities differ from \textit{DiffTry}'s. They attach the priority to the \emph{queue discipline}, whereas \textit{DiffTry}
attaches it to the \emph{retry budget}. They were designed to prioritize  transaction processing and are applied at the service point (of Service~B, in Fig.~\ref{fig:DiffTry-comparison}), where transactions exit the queue and enter service, as opposed to \textit{DiffTry}, which is applied before queue entrance, at the {\it rejection point}. As such, unless coordinated with other mechanisms, they would have difficulties handling rejections. Also, in no-overload situations, they may unnecessarily starve low-priority transactions and inflict long delays. An alternative approach, consisting of prioritization of rejections at the entrance to Service~B, would be challenging to implement as well, since it would require coordination with the buffer management of Service~B.

Recently, major cloud providers and service-mesh platforms have warned that retries can trigger retry storms and lead to performance problems and security threats \cite{microsoft_retry_storm, googleselfinf, tavori2023cost}. 
However, systematic analysis of retries under load remains limited, with only a small number of studies examining its impact on throughput and latency in microservice systems~\cite{mendonca2020model, saleh2022empirical}. 
DAGOR~\cite{DAGOR} also protects high-priority requests, but sheds at admission, requiring each server to sense overload and propagate its admission level upstream; RetryGuard~\cite{retryguard} prevents retry storms, but without differentiating classes. \textit{DiffTry} protects important jobs through the retry budget, needs neither overload detection nor coordination, and so acts immediately.

\section{\textit{DiffTry}: Mechanism Operation and Model} \label{sec:mechanism}

As illustrated in Fig. \ref{fig:DiffTry-comparison}, the key idea of \textit{DiffTry} is to assign different retry-attempt limits to requests from different \textit{priority} classes. 
By doing so, \textit{DiffTry} ensures that high-priority requests receive preferential treatment under overload, while, at the same time, suppresses retry storms. 
\textit{DiffTry} is especially effective during miscoordination-induced overload periods and helps reduce the economic costs associated with it. It addresses three key problems in existing microservices applications:
\begin{enumerate}
    \item \textbf{Insensitivity to priority:} High-priority requests  are rejected  as much as  low-priority ones.
    \item \textbf{Retry amplification:} Excessive retries inflate load and waste capacity and CPU.
    \item \textbf{Over-scaling:} Retried low-priority traffic triggers unnecessary scaling, increasing cost.
\end{enumerate}

This simple differentiation allows \textit{DiffTry} to operate effectively across a range of conditions and it possesses the following properties: (1) During \textbf{overload times}, \textit{DiffTry} shifts the mass of rejections caused by overload periods toward low-priority (non-latency-sensitive or non-revenue-critical) tasks, thereby protecting critical traffic. (2) During \textbf{attack times}, it significantly reduces the economic damages of an attack by preferentially rejecting low-priority requests that would otherwise overwhelm the system. (3) During \textbf{no-overload times}, the mechanism remains non-intrusive and does not interfere with normal traffic, while automatically becoming effective when overload conditions arise. (4) \textit{DiffTry} is designed for \textbf{distributed operation}, without centralized coordination or explicit overload detection, and relying only on local retry decisions. (5) Through \textbf{retry suppression}, \textit{DiffTry} limits excessive retries from low-priority flows, thereby reducing the risk of retry storms and stabilizing system behavior.

We formally model \textit{DiffTry} as follows: 
Service A receives requests from $P$ distinct sources where the arrival rate of class  $p$ is $\lambda_p$.    
It processes the incoming requests and forwards them to Service~B (as in Fig.  \ref{fig:DiffTry-comparison}), whose processing rate is  $\mu$ requests/sec. If Service B is unable to accept a request due to overload, Service A retains the request and retries it later.

    Each flow $p$ is assigned a retry budget $k_p$. This means that  for every $p \in \{1, \ldots, P\}$ flow $p$ may retry a failed request up to $k_p$ times.
Let $\lambda_p^i$ denote the rate of requests from stream $p$ currently in their $i$-th retry attempt, for $i = 1, \dots, k_p$ and $p=1,\dots,P$. A request of type $p$ is dropped permanently if it reaches its maximum retry budget ($k_p$) without being successfully served. During the first $k_p$ attempts, it is eligible for retries, governed by the retry logic of Service A.

Using this model, we next study \textit{DiffTry}'s operation using a two-tier microservice system consisting of two services. 
In Sec. \ref{sec:network}, we extend this model to account for an entire network.

\section{Analysis of \textit{DiffTry} in a single service}\label{sec:analysis}

In this section we analyze \textit{DiffTry} in a single service, derive its performance metrics, and show its ability to enforce prioritization and suppress retry amplification under load.

\subsection{\textit{DiffTry} 
Dynamics Under Overload (Fluid Model)}
\label{subsec:retry-analysis}

We first quantify the steady-state load and the rejection probability of an arbitrary request attempt under overload, when \textit{DiffTry} is active.
We use a \textit{fluid approximation}, which allows us to 
capture
the retry dynamics under overload (when rejection is mostly due to the bounded capacity and not due to finite buffers limitations), while abstracting away queueing details.\footnote{A fluid model approximates discrete request arrivals and services as continuous flows, capturing average system behavior through rates rather than individual jobs. Such models are commonly used to analyze large-scale or heavily loaded systems, where stochastic variability averages out \cite{lenin2000fluid, gamarnik2016delay}.} 
The main advantage of the fluid model is twofold: It enables us to focus on the retry dynamics, and
it ensures that the analysis is independent of the specific queueing discipline, covering a wide class of queueing models. We validate the accuracy of its results through simulations in Sec. \ref{sec:evaluation}.

We assume the system is operating under persistent overload, i.e., $\sum_{p=1}^P{\lambda_p} > \mu$. In this setting, the system reaches a steady state where requests are continuously dropped and retried, but the retry traffic stabilizes due to capped retry budgets.\footnote{Unlike some of the classical models, where steady state exists only when $\rho  < 1$, here the retry cap ensures a fixed bound on retry-induced load, similarly to finite-buffer queueing models \cite{kleinrock1975queueing1}.} The total request load $\Lambda$ on Service B, including both original requests and retries, is given by:
\begin{equation}\label{eq:lambdasum}
    \Lambda = \sum_{p=1}^{P} \left( \lambda_p + \sum_{i=1}^{k_p} \lambda_p^i \right) = \sum_{p=1}^P{\lambda_p} + \sum_{p=1}^P{\sum_{i=1}^{k_p} \lambda_p^i};
\end{equation}
We define the \emph{utilization} (including retries) as $\rho={\Lambda}/{\mu}$ and the \emph{pure utilization} (excluding retries) as $\rho_{\text{pure}}{=}\sum_{p=1}^P{\lambda_p}/\mu$.

Let $s$, $\bar{s} {=} 1 {-} s$  respectively denote the success and rejection probabilities of a single request being served by Service~B. 
They implicitly depend on the total presented load (original and retried), which affects the queueing behavior and acceptance rate.
Our analysis is based on a simplifying approximation, \textit{an independence assumption}, which will  be validated via simulation (Sec. \ref{subsec:validation_assumptions}) (and for extreme cases, relaxed on Sec. \ref{subsec:relax}): We assume that the rejection probability $\bar s$ is independent of its stage $i$; this is likely to hold if the delay between retry attempts is large enough relative to the service time, which is the case in cloud services \cite{scaling_exp_backoff,aws_retry_backoff}.
Under this assumption, retry flows follow a geometric progression, $ \lambda_p^i = \bar{s}^i \cdot \lambda_p$ for $i {=} 1{,} \dots, k_p$. Thus, the overall arrival rate is:  
\begin{equation}
    \label{eq:biglambda}
    \Lambda =  \sum_{p=1}^P{\sum_{i=0}^{k_p} \lambda_p \cdot \bar{s}^i} = \sum_{p=1}^P{\lambda_p \cdot \left( \frac{1 - \bar{s}^{k_p+1}}{1 - \bar{s}} \right)}
\end{equation}  
Since Service B serves at rate $\mu$ while busy, then in steady state we have $\Lambda\cdot(1-\bar{s})=\mu\cdot(1-p_0(\rho))$, where $p_0(\rho)$ is the probability that Service B is idle. This gives:
\begin{equation}
\label{eq:s}
    \bar{s} \;=\; \Bigl(1-\frac{1}{\rho}\Bigr) + \frac{p_0(\rho)}{\rho} \;\ge\; 1-\frac{1}{\rho},
    \qquad \text{for $\rho>1$} .
\end{equation}
The \textit{fluid approximation} takes eq. \eqref{eq:s} with equality, i.e.\ $p_0\!=\!0$, neglecting the idle capacity. As $p_0(\rho)$ is non-increasing in $\rho$ and vanishes as the load grows, it is accurate in the overload regime \textit{DiffTry} targets.
Thus we have two equations in two unknowns ($\Lambda, \bar s$) which can be solved numerically.
Practically, 
they translate into a single equation in one unknown $\bar s$:
\begin{equation}
\label{eq:s-equation}
    \mu = {\sum_{p=1}^P \lambda_p \left( 1 - \bar{s}^{k_p+1} 
\right)}.
\end{equation}
Eq.~\eqref{eq:s-equation} clearly admits a unique solution $\bar {s}^* {\in} (0,1)$; 
it quantifies how the retry budgets $(k_1, ..., k_P)$, arrival rates $(\lambda_1, ..., \lambda_P)$ and capacity $\mu$ determine the system’s rejection rate.

\subsection{A More Detailed Model for No Overload}
\label{sec:no-overload-new}

For completeness, and for the network analysis in Sec.~\ref{sec:network}, we next consider the no-overload regime. 
    For  $\rho {\le} 1$, the fluid approximation would simply yield $\bar s =0$. Combining it with the analysis above yields a monotone continuous approximation, valid for the full range of operations. 
    To obtain
    a more precise approximation, one can replace Eq. (\ref{eq:s}) with  a (system dependent) queueing model to provide a continuous expression 
    for the rejection probability as a function of $\rho$: $\bar{s}=\bar{s}(\rho)$.

For example, for an $M/M/1/m$ queue,
$\bar{s}(\rho)=
\frac{(1-\rho)\rho^m}
{1-\rho^{m+1}}$, which is the well-known blocking probability of the model~\cite{kleinrock1975queueing1}.
Given $\{ \lambda_p \}, \mu$ the corresponding rejection probability $\bar{s}^*$ can be calculated by plugging this equation into Eq. (\ref{eq:biglambda}) and solve for $\rho$.

In order to guarantee a continuous approximation over the entire operating range, we combine the queueing-based model for $\rho\le1$ with the fluid approximation for $\rho>1$. If $p_0$ can be calculated given the queueing model, one can plug it to eq. \eqref{eq:s}. Otherwise, since $\bar{s}$ is nondecreasing in $\rho$, then $\bar{s}(1)$ is a lower bound for $\rho>1$, so combine our bounds to define:
\begin{equation}
\hat{s}(\rho)=
\begin{cases}
\bar{s}(\rho), & \rho\le1,\\[1ex]
\max\!\left(1-\nicefrac{1}{\rho},\,\bar{s}(1)\right), & \rho>1,
\end{cases}
\label{eq:combined_sbar}
\end{equation}
where $\bar{s}(1)=\lim_{\rho\rightarrow1^-}\bar{s}(\rho)$ whenever necessary. The resulting function is continuous and monotone; thus,  replacing $\bar{s}(\rho)$ by $\hat{s}(\rho)$ in Eq.~\eqref{eq:biglambda} yields a unified fixed-point equation.

\subsection{Per-Stream Throughput and Rejection Rates}
\label{subsec:throuput}
Once $\bar{s}$ is solved numerically, we can now compute the contribution of each stream to the total load and the 
effective throughput.
Let $\Lambda_p$ denote the total request rates from stream p, including both original and retry attempts. Using the geometric sum derived earlier, we get for every $p\in[P]$:
$\Lambda_p = \lambda_p \cdot \sum_{i=0}^{k_p} \bar{s}^i = \lambda_p \cdot \left( \nicefrac{1 - \bar{s}^{k_p+1}}{1 - \bar{s}} \right)$.
Each request attempt has a success probability \( s = 1 - \bar{s} \). Thus, the \textit{expected} successful completions rate for class $p$ is:
$E_p = \Lambda_p \cdot s = \lambda_p \cdot ({1 - \bar{s}^{k_p+1}})$.

We now define the {\it request success probability} of class $p$: $S_p$, as the fraction of class-$p$ requests that are successfully served. Likewise, let $\bar{S}_p$ denote the corresponding {\it request reject probability}:
$S_p=\frac{E_p}{\lambda_p}=1-\bar{s}^{k_p+1}$
and
$\bar{S}_p=\bar{s}^{k_p+1}$.

\begin{figure*}[t]
    \centering

    \begin{minipage}[t]{0.48\textwidth}
        \centering
        
        \begin{minipage}[t]{0.48\linewidth}
            \centering
            \includegraphics[width=\linewidth]
            {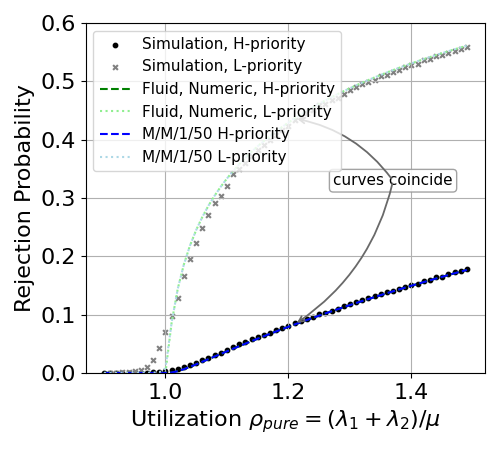}
            {\small (a) $(k_1, k_2)=(5,1)$}
        \end{minipage}
        \hfill
        \begin{minipage}[t]{0.48\linewidth}
            \centering
            \includegraphics[width=\linewidth]
            {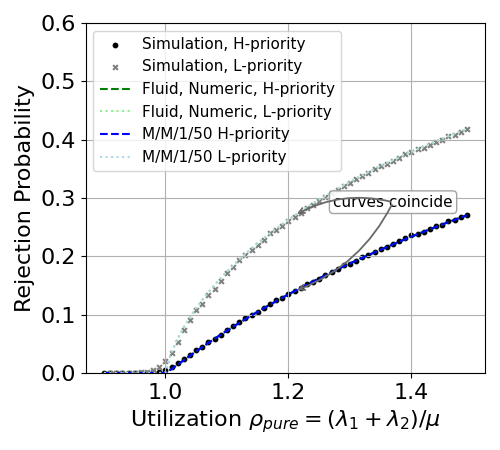}
            \\
            {\small (b) $(k_1, k_2)=(5,3)$}
        \end{minipage}

        \caption{Simulation VS Analytic: Rejection probability of original requests with \textit{DiffTry} in a single $M/M/1/50$ service.
        }
        \label{fig:validation_retry_budgets}
    \end{minipage}
    \hfill
    \begin{minipage}[t]{0.48\textwidth}
        \centering
        
        \begin{minipage}[t]{0.48\linewidth}
            \centering
            \includegraphics[width=\linewidth]
            {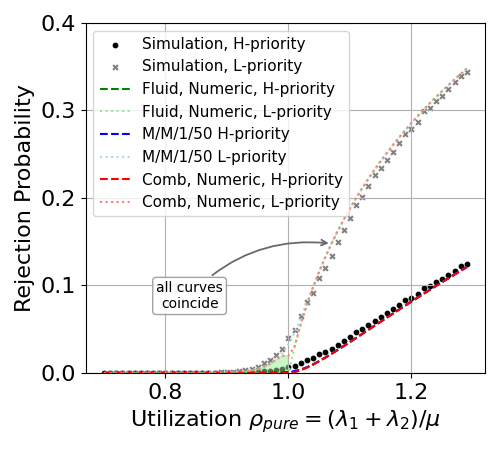}
            {\small (a) Typical (cloud-like) buffer size ($m=50$) }
            \label{fig:validation_mixed_a}
        \end{minipage}
        \hfill
        \begin{minipage}[t]{0.48\linewidth}
            \centering
            \includegraphics[width=\linewidth]
            {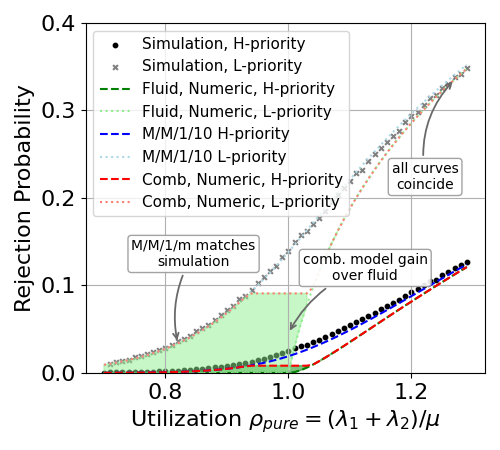}
            {\small (b) Atypical (corner case) buffer size ($m=10$)}
            \label{fig:validation_mixed_b}
        \end{minipage}

        \caption{Model behavior near the critical load $\rho\approx 1$.
        }
        \label{fig:validation_mixed}
    \end{minipage}

\end{figure*}

This captures the tail behavior of each flow. Lower $k_p$ values sharply reduce the retry-induced load and limit the number of requests that persistently stress the system under overload.

\subsection{Relaxing the Independence Assumption for Highly Variable Service Times}
\label{subsec:relax}
The model above assumes retries fail independently, which
holds when the timeout exceeds the service time. Under highly variable service times a
retry issued soon after a rejection may still meet the \emph{same} long job,
correlating failures. Conditioning on the residual $R$ of that job, retries before it fail deterministically, while retries afterwards have a fresh probability $\bar{s}$ since the queue renews. Writing $\bar{S}_n$ for rejection on the initial
attempt plus $n$ retries and
conditioning on
$R$:
\[
\mathbb{P}(\bar{S}_k) = \sum_{i=0}^{k} \mathbb{P}(\bar{S}_k \mid t_i < R \le t_{i+1}) \cdot \mathbb{P}(t_i < R \le t_{i+1}),
\]
Which can be calculated recursively given $\mathbb{P}(t_i < R \le t_{i+1})$:
\[
\mathbb{P}(\bar{S}_k) = \bar{s} \cdot \left[ \mathbb{P}(R > t_k) + \sum_{i=0}^{k-1} \mathbb{P}(\bar{S}_{k-i-1}) \cdot \mathbb{P}(t_i < R \le t_{i+1}) \right].
\]
For constant timeouts with memoryless service times (like an $M/M/1$ server), each interval has the same chance $\alpha=\mathbb{P}(R<t)$, collapsing the equation to
\[
\mathbb{P}(\bar{S}_k)=\mathbb{P}(\bar{S}_{k-1})[(1-\alpha)+\alpha\bar{s}] = 
\bar{s} \cdot \Big( (1-\alpha) + \alpha \bar{s} \Big)^k .
\]

When $\alpha{=}1$, the blocking job completes before the next retry attempt, giving the independence assumption, $\bar{S}_p{=}\bar{s}^{k_p+1}$. Smaller $\alpha$ values represent stronger correlation, as retries are more likely to observe the same blocking job.

\section{Examination of \textit{DiffTry} in a Single-Service}\label{sec:evaluation}

We next examine \textit{DiffTry} in  a single-service setting, with three objectives:  (i) Evaluate the accuracy of the independence assumption and the fluid approximation (see Sec. \ref{subsec:retry-analysis}). (ii) Examine how effective are the retry-limits in prioritizing rejections across classes at overload situations. (iii) Examine \textit{DiffTry} in no-overload situations.
We use two examination machineries: a discrete-event simulation of an \textit{M/M/1/m} system with retries, and numerical evaluation of both the \textit{M/M/1/m} model and the fluid model. We validate the simulator against the exact \textit{M/M/1/m} numerical evaluation,
and assess the fluid approximation by comparing it to the \textit{M/M/1/m} evaluation. The code of the simulations is available in \cite{difftry_code}.

\subsection{Examination in typical cases: Accuracy and Effectiveness}
\label{subsec:validation_assumptions}
We aim at evaluating the model in typical situations. We consider a system as in Sec. \ref{sec:analysis} and in Fig. \ref{fig:DiffTry-comparison}, and assume that the buffer of Service B has room for 50 tasks; we expect that in typical application this number will be in the order of 10s or 100s.  We further assume that the retry timeout lasts in the order of several service times (of Service B).

Fig.~\ref{fig:validation_retry_budgets} validates the model across retry budgets: we fix $k_1$, vary $k_2<k_1$, and sweep the offered load via $\lambda_2$ (with $\lambda_1$ fixed).
The simulation closely matches both the $M/M/1/m$ and fluid-model derivations, supporting our approximations.

Beyond validating the approximations, Fig.~\ref{fig:validation_retry_budgets} shows two central effects.
First, increasing the low-priority retry budget  increases the volume of retry traffic and aggregate load, demonstrating \emph{retry-amplification} (storm).
Second, \textit{DiffTry} enforces \emph{prioritization}: across traffic mixes, the class with the larger retry budget has substantially lower ultimate rejection probability, while the lower-priority class absorbs most losses.

\subsection{
Examination of the model in extreme cases}
\label{subsec:validation_mixed}

By eq. \eqref{eq:s}, the fluid approximation errs by $p_0/\rho$ above capacity, and below it by the whole rejection, since it assumes $\bar{s}\approx0$ for $\rho\le1$. For \textit{M/M/1/m} both are largest at $\rho=1$; the error is therefore at most $1/(m+1)$ everywhere, and shrinks geometrically in $m$ away from criticality.
Fig.~\ref{fig:validation_mixed} evaluates this failure mode by comparing two buffer sizes.
In the small-buffer setting ($m=10$, Fig. \ref{fig:validation_mixed}(b)), $1/(m+1)$ is not small, so rejection is already non-negligible below overload, causing the fluid model to deviate near $\rho=1$.
The combined approximation addresses this by using the exact \textit{M/M/1/m} rejection for $\rho\le 1$ and transitioning to the overload lower bound for $\rho>1$ (Eq.~\eqref{eq:combined_sbar}), yielding a closer match where the fluid approximation is least reliable.
We stress that \textit{DiffTry} primarily targets cloud microservices, where $m$ is in the tens or hundreds and $1/(m+1)$ is correspondingly negligible, making the fluid approximation accurate (Fig.~\ref{fig:validation_mixed}(a)).
Nevertheless, below-overload rejection may arise with small buffers or effective capacity reductions, e.g., transient contention or scheduling effects. In such cases, the mixed approximation improves robustness while preserving tractability (Fig.~\ref{fig:validation_mixed}(b)).

To probe the residual correction of Sec.~\ref{subsec:relax}, we examine the regime it
is designed for: a small retry timeout, so a retry can strike the \emph{same}
still-blocking job. Fig.~\ref{fig:validation_residual} reports the per-class rejection
in this short-timeout regime under (a)~low and (b)~high service-time variance; we induce
high variance with a two-phase (hyper-exponential) service in which a job is 'long'
with probability $q=0.01$ (mean service $20\times$ larger) and 'short' otherwise. With low
variance the correction is small (it stems from the short timeout alone) and the residual
model matches the simulation. With high variance (or low retry timeouts) the independence assumption breaks
(correlated failures), yet the residual model still tracks the
simulated rejection. We also tested many variances and smaller timeouts; all gave similar results. Hence even in this adversarial corner, 
our residual model accurately predicts the rejection rate.

\begin{figure}[t]
    \centering
    \begin{minipage}[t]{0.48\linewidth}
        \centering
        \includegraphics[width=\linewidth]{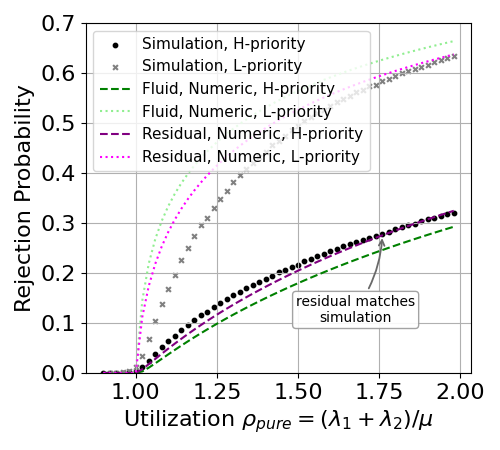}
        {(a) low variance ($\mu_{\text{high}}=\mu_{\text{low}}$)}
    \end{minipage}
    \hfill  
    \begin{minipage}[t]{0.48\linewidth}
        \centering
        \includegraphics[width=\linewidth]{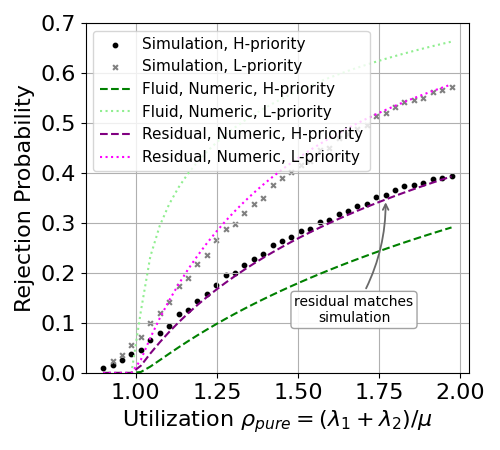}
        {(b) high variance ($\mu_{\text{high}}=20\cdot\mu_{\text{low}})$
        }
    \end{minipage}
    \caption{low retry timeouts ($R=t$), high/low variance ($q=\frac{1}{100}$)}
    \label{fig:validation_residual}
\end{figure}

\section {Analysis of \textit{DiffTry} on a General Network}\label{sec:network}

The analysis provided above dealt with the application of \textit{DiffTry} in  two interacting services. A general system may consist of a network of tens/hundreds of services (e.g., TopFull \cite{topfull2024}).
While the application of \textit{DiffTry} is local to these two services its effects may propagate through the network. 
Therefore an analysis of \textit{DiffTry} over a general network is needed. 

To conduct  the analysis we adopt the standard open network flow model (as used in Jackson networks \cite{kelly1979reversibility,kleinrock1975queueing1}, and in more general open network models, including priorities \cite {baskett1975bcmp}). We then generalize that model to accommodate the   \textit{DiffTry} equations; their non-linearity makes the analysis non-trivial. 

\subsection {The Standard Network Flow Model}

Under the standard open network flow models the network of services consists of $N$ services (nodes). External traffic of transactions (jobs)  enters node $i$  at rate $\gamma_i$. 
In Jackson networks it is further assumed that $\gamma_i$ is Poisson. Since our analysis deals only with rates 
we do not require this assumption. 

We further assume that the jobs are divided into classes $p{=}1,  ..., P$; let $\gamma_i^{(p)} $ denote  the external arrival rate  of class $p$ to node $i$.
After completion of job at node $i$, a fraction $f_{i,k}^{(p)}$ is directed to service $j{=} 1,..,N$, where $\sum_{j=1}^N f_{i,j}^{(p)} \leq 1$; a fraction $1{-}\sum_{j=1}^N f_{i,j}^{(p)}$ terminates and leaves the system.
The total rate of class $p$ jobs entering node $i$ is denoted by $\lambda_i^{(p)}$ and is given by the equation set (known as the {\it traffic equations}): 
\begin {equation}
\label{eq:linear}
\lambda_i^{(p)}= 
\gamma_i^{(p)} + 
\sum_{j=1}^N 
\lambda_j^{(p)} f_{j,i}^{(p)} \qquad i=1,...,N, p=1, ..., P.
\end {equation} 
It may be convenient to use vector notation terminology where $\lambda$ and $\gamma$ are vectors of size $NP$ holding the $\lambda_i^{(p)}$ and $\gamma_i^{(p)}$ variables, respectively, and $F$, called the {\it routing matrix},  is a matrix whose dimension is $NP \times NP$. Then Eq.  \ref{eq:linear} is expressed as $\lambda = \gamma + \lambda F$.
Deriving the set of arrival rates $\lambda_i^{(p)}$ for all $i$ and $p$ requires solving this linear equation set. This leads to the solution $\lambda = \gamma (I-F)^{-1}$.
It is well known that  this equation set can be solved   by iterating it on the vector $\lambda$:
\begin {equation}
\label{eq:iter-solution}
\lambda =
\gamma + \lambda F = 
\gamma + \gamma F + \lambda F^2 =
\dots = 
\gamma \sum_{i=0}^\infty F^i.
\end {equation}
A standard sufficient condition for the convergence of this sum (and, subsequently, for Eq. \eqref{eq:linear} to have a solution) in open queueing networks is that the spectral radius of the routing matrix $F$ (which is also assumed to be substochastic) satisfies $\rho(F) < 1$, which ensures that the fundamental matrix $(I-F)^{-1}$ exists and the traffic equations admit a unique nonnegative solution~\cite{kelly1979reversibility}.
\footnote{
Having a solution to the equations does not imply  stability of the system queues, which depends on their service rates.
}

\subsection {Generalizing the Standard Model to Accommodate \textit{DiffTry}}

To extend the model, we distinguish between the roles of $\lambda_i^{(p)}$ on the left- and right-hand sides of Eq.~\eqref{eq:linear}. 
Let $\lambda_i^{I;(p)}$ and $\lambda_i^{O;(p)}$ denote the input and output rates of  service $i$, respectively. 
Eq.~\eqref{eq:linear} becomes:
\begin{equation}
\lambda_i^{I;(p)}= 
\gamma_i^{(p)} +
\sum_{j=1}^N 
\lambda_j^{O;(p)} f_{j,i}^{(p)}.
\label{eq:linear-gen}
\end {equation} 

Now, for each service $i$ the differential retry mechanism will enter into effect as in Sec.
\ref{sec:analysis}. 
We denote the limit on the number of retries for node $1\leq i\leq N$ and for class $1\leq p \leq P$ as $k_i^{(p)}$. 
That is, differential rejections (resulting from differential retries) will operate on the set of $P$ priorities. 
As described in Sec.~\ref{sec:analysis}, $\bar s_i$ can be derived from the input parameters of this service: $(\mu_i, \lambda_i^{I;}, k_i^{;})$
where we use vector notations:$\lambda_i^{I;} := (\lambda_i^{I;(1)}, \dots, \lambda_i^{I;(P)})$ (and similarly$\lambda_i^{O;}) $, $k_i^{;} := (k_i^{(1)}, ..., k_i^{(P)})$.
Consequently, the rate at which class $p$ is processed at microservice $i$
is simply given by 
\begin {equation}
\label{eq:output}
\bar s_i = 
\bar s_i  (\mu_i, \lambda_i^{I;}, k_i^{;}), \quad
\lambda_i^{O;(p)}=\lambda_i^{I;(p)} (1- {\bar s_i} ^{k_i^{(p)} +1})
\end {equation} 
Eqs. (\ref{eq:linear-gen}) - (\ref {eq:output})
form an equations set in $2NP $ unknowns ($\lambda_i^{O;(p)}$, 
$\lambda_i^{I;(p)}$), whose solution will provide the resulting transaction rates in our system. This equation set is non-linear
and thus its solution, which we establish next, is not immediate.  

\subsection{Solving for the transaction rates for DAG Networks}
\label{subsec:solve_dag}

In a DAG, the system of equations~\eqref{eq:linear-gen} and~\eqref{eq:output} can be solved via a single forward pass along any topological order. When processing node $i$, the output rates $\lambda_j^{O;}$ of all predecessors $(j \to i) \in E$ are already known, allowing for the direct evaluation of the input vector $\lambda_i^{I;}$ via~\eqref{eq:linear-gen}. Using $\lambda_i^{I;}$ alongside the local parameters $\mu_i$ and $k_i^{;}$, we subsequently compute $\bar{s}_i$ and the corresponding output rates via~\eqref{eq:output}. This acyclic dependency ensures the procedure yields a solution and terminates in a single pass. Furthermore, by induction, local uniqueness at each node (Sec. \ref{sec:analysis}) guarantees uniqueness. Note that it is independent of the topological order chosen.

\subsection{Solving for the Transaction Rates in General Networks}
\label{sec:solve_trans}

The  methods of Sec. \ref{subsec:solve_dag} might not work for graphs containing cycles. Therefore, in this subsection we provide a general solution:  We propose  to use an iterative method on the $\lambda$ unknown variables, similarly to the standard iteration used on standard open queueing networks, and given in Eq. \eqref{eq:iter-solution}. 
To overcome the non-linearity, we propose to solve, on each iteration, for $\bar  s_i$ as given in Sec. \ref{sec:analysis}. This solution is performed in each iteration since $\bar s_i$ depends on the values of the $\lambda$ vector, which change over the iterations. For the sake of the analysis we will denote the iteration number by $t = 1, 2, ...$ and the corresponding $\bar s_i$ and $\lambda$ as  $\bar  s_i (t)$ and $\lambda(t)$, respectively. 

We next establish that this process converges by comparing it to the standard equation set given by Eq. \eqref{eq:linear}\footnote{In fact, this is a special case of our system,
whereby there are no rejections
},
showing that if eq. (\ref{eq:linear}) is solvable (converges)  then Equation set (\ref{eq:linear-gen}) --
(\ref {eq:output}) is also solvable (converges) via an equivalent iterative procedure. 

\begin {theorem}
\label {thm:convergence}
Under the conditions in which eq. (\ref{eq:linear}) has a solution (e.g. via an iterative procedure 
given by Eq. \eqref{eq:iter-solution}) then Equation set (\ref{eq:linear-gen}) -- (\ref {eq:output}) has also a solution. 
\end {theorem}

\begin{proof}[\textbf{Outline of Proof}]
The proof is carried out  using  an iterative solution and establishing its convergence.
Transforming eqs. \eqref{eq:linear-gen} -- \eqref{eq:output}
into iteration-index ($t$) dependent, and substituting them allows
re-writing them in the format of eq.~\eqref{eq:linear}: 
\begin {equation}
\lambda_i^{I;(p)} (t+1)= 
\gamma_i^{(p)} +
\sum_{j=1}^N 
\lambda_j^{I;(p)} (t)  \tilde f_{j,i}^{(p)}(\lambda_j^{I;}(t)),
\end {equation}
where
$\tilde f_{j,i}^{(p)}(\lambda_j^{I;}(t))
{:=}  S_j^{(p)} (t)  f_{j,i}^{(p)}$,
and $S_j^{(p)}(t) {:=} 1- \bar s_j^{k_j^{(p)} +1} (t)$ is 
the eventual success probability.
In vector form this gives: 
\begin {equation}
\lambda(t+1) = \gamma + \lambda (t) \tilde F (\lambda(t)).
\end {equation} 
Where $\tilde F(\lambda(t))$ is a matrix of size $NP\times NP$
expanded with zeros to get a $NP\times NP$ matrix.
The key point of the proof is that 
$S_j^{(p)} (t) $ is a probability, $0 \leq S_j^{(p)} (t)  \leq 1 $ and therefore  $\tilde F (t)$ obeys $0 \leq \tilde F (t) \leq F$, which leads to the proof.
\end{proof}

\subsection{Uniqueness of the Solution for cyclic graphs}
\label{sec:uniqueness}
In the previous subsection we showed that the network equations converge, and accordingly the  rejection probability of each edge can be derived.
While uniqueness for DAG topologies was proved,
for general networks \emph{with cycles}, uniqueness is non-trivial. The standard traffic
equations~\eqref{eq:linear} are linear and admit a unique solution whenever $\rho(F)<1$;
However, 
the DiffTry equations~\eqref{eq:linear-gen}--\eqref{eq:output} are \emph{nonlinear}—the
effective routing weights $\widetilde{f}^{(p)}_{i,j}=S^{(p)}_i(\lambda_i)f^{(p)}_{i,j}$
depend on the solution $\lambda$ itself through the shared per-node rejection probability
$\bar{s}_i$. Note that when all retry budgets are equal, this coupling vanishes and the system
reduces to a single-priority linear network; however, differentiated budgets break this structure,
and uniqueness requires a separate analysis.

We establish uniqueness by first proving it for two important cases (Thm. \ref{thm:uniqueness} next), and then providing support through numerical validations of the general case.

\subsubsection{\textbf{uniqueness proof}}
Theorem~\ref{thm:uniqueness} states the results.
\begin{theorem}[Sufficient conditions for a unique solution]
\label{thm:uniqueness}
The equation set~\eqref{eq:linear-gen}--\eqref{eq:output} has a unique nonnegative solution
under either of the following conditions:
\begin{enumerate}
    \item $\|F\|_\infty < \frac{1}{P-1}$\;\textup{(}$\|F\|_\infty < 1$ for $P=2$\textup{)}.
    \item The routing is class-independent \textup{(}$f^{(p)}_{i,j}=f^{(q)}_{i,j}$ for all
          $p,q$\textup{)} and $P\le 2$.
\end{enumerate}
\end{theorem}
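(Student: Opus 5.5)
We prove uniqueness through a contraction-type argument on the input-rate vector $\lambda=\lambda^{I}$. Any solution is a fixed point of $\Phi(\lambda)=\gamma+\lambda\tilde F(\lambda)$, where $\tilde F_{(j,p),(i,p)}(\lambda)=S^{(p)}_j(\lambda_j)f^{(p)}_{j,i}$ and $S_j^{(p)}=1-\bar s_j^{\,k_j^{(p)}+1}$. The function $\bar s_j$ is determined by Eq.~\eqref{eq:s-equation}, or by its combined version through $\hat s$, from $\lambda_j^{I;}$ and $\mu_j$.

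The first step is to study the local map $\lambda_j^{I;}\mapsto \lambda_j^{O;}$, i.e.\ $g_j^{(p)}(\lambda_j)=\lambda_j^{(p)}\bigl(1-\bar s_j^{k_j^{(p)}+1}\bigr)$. In overload, $\sum_p g_j^{(p)}=\mu_j$ by Eq.~\eqref{eq:s-equation}. Implicit differentiation of that equation gives $\partial\bar s_j/\partial\lambda_j^{(q)}=S_j^{(q)}\big/\sum_r\lambda_j^{(r)}(k_j^{(r)}+1)\bar s_j^{k_j^{(r)}}\ge0$. From this we get the Jacobian structure:
\[
\frac{\partial g_j^{(p)}}{\partial\lambda_j^{(p)}}\in[0,1], \qquad \frac{\partial g_j^{(p)}}{\partial\lambda_j^{(q)}}\le 0 \ (q\ne p), \qquad \sum_{p}\frac{\partial g_j^{(p)}}{\partial\lambda_j^{(q)}}=0 .
\]
The column sums vanish because total output is pinned at $\mu_j$. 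Hence each off-diagonal entry in a column has magnitude at most the diagonal entry, which is at most $1$, and the total off-diagonal mass in a column is at most $S_j^{(q)}\le 1$. In the non-overload regime (using the monotone $\hat s$), the same signs hold and the column sums lie in $[0,1]$. Thus the local Jacobian $J_j$ is bounded in an appropriate norm: the positive part of $J_j$ is at most $I$ entrywise, and the negative part has per-column mass at most $1$.

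For condition~1, we compose. The Jacobian of $\Phi$ is $D\Phi=J\,F$, where $J$ is block-diagonal with blocks $J_j$ acting within a node across classes, and $F$ routes within a class. We bound $\|D\Phi\|_\infty\le\|J\|_{\bullet}\|F\|_\infty$ in the row-vector convention, choosing the norm so that the cross-class coupling contributes a factor at most $P-1$. Each class's output can be depressed by at most the $P-1$ other classes, each with weight at most $1$. This gives $\|D\Phi\|_\infty\le (P-1)\|F\|_\infty<1$. For $P=2$ the sharper bound is simply $\|F\|_\infty$, since for two classes the single off-diagonal entry is exactly minus the other class's diagonal response. By the mean value inequality along segments (the nonnegative orthant is convex), $\Phi$ is a contraction on the invariant region $0\le\lambda\le\gamma(I-F)^{-1}$ identified in Theorem~\ref{thm:convergence}. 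Banach's theorem then yields a unique fixed point, which must be the one constructed there.

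For condition~2 (class-independent routing, $P\le2$), we use a monotonicity argument instead of a contraction. With common routing, the aggregate rates $\Lambda_j=\sum_p\lambda_j^{(p)}$ satisfy the linear equations with node outputs $\min(\cdot,\mu_j)$-like saturating maps. For $P=2$, the map from $(\lambda^{(1)},\lambda^{(2)})$ is order-preserving under the twisted order, increasing in class 1 and decreasing in class 2: raising the class-2 input lowers class-1 output through $\bar s$ and vice versa, and routing preserves this order because it does not mix classes. Iterating from the minimal and maximal elements of the invariant box gives least and greatest fixed points in this order. We then show they coincide by comparing their aggregate flows, since a strict difference would contradict the conservation $\sum_p g_j^{(p)}=\mu_j$ at saturated nodes combined with $\rho(F)<1$.

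\textbf{Main obstacle.} The hardest part is making the norm bound in condition~1 rigorous across the regime switch at $\rho_j=1$. There $\bar s_j$ is only piecewise smooth, so we would work with one-sided derivatives or Lipschitz estimates of $\hat s$. We would also pin down exactly which weighted norm produces the factor $P-1$.
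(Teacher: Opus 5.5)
Your Condition~1 argument is essentially the paper's: bound the Lipschitz constant of the node map $\varphi$ by implicitly differentiating Eq.~\eqref{eq:s-equation}, multiply by a norm of $F$, and apply Banach. The paper crosses the regime switch by splitting a segment at $\rho_i=1$ and using the triangle inequality, and it simply takes $\varphi$ to be the identity below capacity.

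To pin down the norm, set $a_p=\lambda^{(p)}(k_p+1)\bar s^{k_p}/\sum_r\lambda^{(r)}(k_r+1)\bar s^{k_r}$, so that $\sum_p a_p=1$. The Jacobian then has $J_{pp}=S^{(p)}(1-a_p)$ and $J_{pq}=-a_pS^{(q)}$ for $q\ne p$.
\begin{itemize}
\item Your column-oriented bounds only give absolute column sums $2S^{(q)}(1-a_q)\le 2$, which is the $\ell_1$ constant.
\item The factor $P-1$ comes from the row sums, $S^{(p)}(1-a_p)+a_p\sum_{q\ne p}S^{(q)}\le\max(1,P-1)$, so it is an $\ell_\infty$ constant.
\item In your row-vector convention, this constant must be paired with the $\ell_\infty$ norm of $x\mapsto xF$, which is the maximal \emph{column} sum $\max_{i,p}\sum_j f^{(p)}_{j,i}$. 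Pairing with row sums would require the constant $2$.
\end{itemize}
You should also drop your claims about signs and column sums under $\hat s$ below capacity. They are unproved, because there $\bar s$ is itself a fixed point ($\rho$ depends on $\bar s$ through the retries). They are also not needed.

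For Condition~2 you take a genuinely different route from the paper. The paper builds node-uniform weights from $\rho(\hat F)<1$ and reuses the contraction with the $P=2$ constant $1$. In your route, the decisive ``they coincide'' step does not follow as stated.

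Conservation does give you something real. With class-independent routing and $\varphi$ the identity below capacity, the aggregates $\Lambda_i=\sum_p\lambda^{(p)}_i$ solve $\Lambda=\Gamma+\min(\Lambda,\mu)\hat F$, where $\Gamma_i=\sum_p\gamma^{(p)}_i$. Any two solutions satisfy $|\Lambda-\Lambda'|\le|\Lambda-\Lambda'|\hat F^n\to0$, so the node loads are unique for every $P$.

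But conservation says nothing about how a fixed total splits between classes. For your extremal fixed points it only yields $u^{(1)}-v^{(1)}=v^{(2)}-u^{(2)}=:\eta\ge0$, not $\eta=0$. The missing ingredient is a slope bound. Hold node $i$'s aggregate fixed and view its class-1 output $\psi_i$ as a function of the class-1 input. Then $\psi_i$ has slope in $[0,1]$: it is the identity below capacity, and in overload its slope is $S^{(1)}a_2+S^{(2)}a_1\le 1$ (the $P=2$ row-sum bound). With this, $\eta\le\eta\hat F\le\dots\le\eta\hat F^n\to0$. Without such a bound, $x=\gamma^{(1)}+\psi(x)\hat F$ could have several roots (for example, a single node with self-loop fraction $f$ and $\psi'>1/f$ somewhere). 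So conservation plus $\rho(F)<1$ alone cannot close the argument.

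Once repaired, your route buys something the paper's does not. It gives uniqueness of node loads for any $P$. It also reduces $P=2$ to a scalar $1$-Lipschitz equation that works for any two fixed points, so the twisted-order sandwich becomes unnecessary. The paper's contraction, in turn, gives geometric convergence from any starting point.

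Note that your aggregate decoupling needs the pure fluid model. Under $\hat s$, a node's total output depends on the class mix through the retry-inflated load, which is inconsistent with your use of $\hat s$ in Condition~1.
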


Condition~(1) can be applied widely to applications with two priority classes (which we expect to be the most common implementation of \textit{DiffTry}). Under $P=2$ the condition means that $\sum_j f_{i,j}^{(p)}<1$ for all $i$, implying that a fraction of the transactions must leave in each node. In practice, it is possible that  for some nodes, say $i$, $\sum_j f_{i,j}^{(p)}=1$. However, for such cases one could slightly "tweak" the system by subtracting $\epsilon \rightarrow 0$ from one of the entries in row $i$, implying tiny departure at node $i$, which will provide a unique solution. Such "tweak" is not expected to affect system performance for practical (not ill conditioned) systems\footnote{In fact, such tiny transaction departures are naturally expected in any node due to natural transaction failures.}.
Condition~(2) captures systems in which routing does not differentiate by class, e.g.\
when DiffTry is layered onto an existing non-prioritized application.

\begin{proof}[\textbf{Outline of Proof}]
Define the map $T(\lambda):=\gamma+\varphi(\lambda)F$, where
$(\varphi(\lambda))^{(p)}_i := \lambda^{(p)}_i S^{(p)}_i(\lambda_i)$
encodes the DiffTry transformation: the output rate of class $p$ at node $i$ after
applying the retry mechanism. Any solution to~\eqref{eq:linear-gen}--\eqref{eq:output}
is a fixed point $\lambda=T(\lambda)$, 
so it remains to show $T$ has \emph{at most one} fixed point.

\textit{General uniqueness criteria.}
Since $T(\lambda)-T(\lambda')=(\varphi(\lambda)-\varphi(\lambda'))F$, then using the submultiplicativity of induced operator norms, if  $\varphi$ is $L$-Lipschitz and $\|F\|<1/L$ (in some norm), then $T$ is a
strict contraction, and the Banach Fixed-Point Theorem yields a unique fixed point.
The two conditions of Thm.~\ref{thm:uniqueness} correspond to two ways of establishing
$L\|F\|<1$.

\textit{Bounding the Lipschitz constant of $\varphi$.}
The map $\varphi$ is defined piecewise: (i) for simplicity, we assume that in the no-overload region ($\rho_i{<}1$) it
acts as the identity, so it is trivially $1$-Lipschitz. (ii) In the overload region
($\rho_i{>}1$), the per-attempt rejection probability $\bar{s}_i$ is implicitly given by $\lambda_i$ (and the constants) through eq.~\eqref{eq:s-equation}, so
$\varphi^{(p)}_i {=} \lambda^{(p)}_i S^{(p)}_i(\bar{s}_i)$ is a smooth function of
$\lambda_i$. Implicit differentiating $\varphi_i$ with respect to $\lambda_i^{(q)}$ yields a Jacobian matrix $J$ satisfying $\|J\|_\infty {\le} P{-}1$ and $\|J\|_1 {\le} 2$, since its maximum row and column sums are bounded by $P{-}1$ and $2$, respectively.
(iii) For $\lambda_i,\lambda_i'$ in different regions ($\rho_i{<}1$ and $\rho_i{>}1$), we need to show that $\|T(\lambda_i){-}T(\lambda'_i)\|{\le}L\|\lambda_i{-}\lambda'_i\|$. let $\tilde{\lambda}_i$ be the intersection of $[\lambda_i,\lambda_i']$ with $\rho_i{=}1$. The triangle inequality applied to the two resulting subsegments, each lying entirely within one region, recovers the same global bound. These per-node bounds extend directly to the global operator $\varphi$ under $\|\lambda\|_{1,w}=\sum_i w_i\|\lambda_i\|_1$ or $\|\lambda\|_{\infty,w}=\max_i w_i\|\lambda_i\|_\infty$, with the same Lipschitz constant for any choice of weights.

\textit{Bounding $\|F\|$ for Condition~(1).}
The Lipschitz bounds above give $L=P{-}1$ in $\ell_\infty$
and the hypotheses $\|F\|_\infty<\frac{1}{P-1}$ directly give the criteria for uniqueness.

\textit{Bounding $\|F\|$ for Condition~(2).} 
In principle, $\rho(F)<1$ (which holds for any open network) already guarantees
the existence of a weighted norm in which $\|F\|<1$ .
however, the resulting weights live in $\mathbb{R}^{NP}$, one per coordinate, whereas the Lipschitz bound on $\varphi$
is established \emph{blockwise},
with only $N$ weights, one per node, applied uniformly across the $P$ classes of each block.
This is where class-independence is essential: the matrix $F$ is (up to a permutation) block-diagonal with $P$
identical copies of the $N\times N$ scalar routing matrix $\hat{F}$, so
$\rho(F)=\rho(\hat{F})$. Because $\rho(\hat{F})=\rho(F)<1$, one can therefore find $N$ weights from $\hat{F}$ alone
and use them uniformly across all classes to define a valid blockwise norm in which
$\|F\|<1$.\footnote{
Choose $\beta$ with
$\rho(\hat{F}){<}\beta{<}1$. Therefore $\rho(\tfrac{1}{\beta}\hat{F}){<}1$, so the Neumann-series converges and $(I{-}\tfrac{1}{\beta}\hat{F})^{-1}{>}0$. Setting
$\tilde{w}^\top{:=}\mathbf{1}^\top(I{-}\tfrac{1}{\beta}\hat{F})^{-1}$ gives
$\tilde{w}^\top\hat{F}{=}\beta(\tilde{w}^\top{-}\mathbf{1}^\top){\le}\beta\tilde{w}^\top$. Thus, $w'_i{:=}1/\tilde{w}_i$ gives
$\|F\|_{\infty,w'}{\le}\beta{<}1$.}
Since $\varphi$ is $1$-Lipschitz
($P\le2$), uniqueness follows.
\end{proof}

\subsubsection{\textbf{Numerical verification}}
Uniqueness beyond the above conditions remains open. To examine it
empirically, we generated tens of thousands of cyclic networks using the $\mu$Bench
and DGG graph generators~\cite{mubench,DGG}, and for each we tested hundreds of
parameter configurations and thousands of distinct initializations $\lambda(0)$,
amounting to roughly $60$ million cases across $P<10$ priority classes. In no instance
did the fixed-point iteration converge to two distinct solutions, providing strong
empirical evidence that uniqueness holds universally.

\subsection{Transaction (job)  Completion Probability}
\label{sec:comp_prob}
In practice, one may want to derive the probability that a job completes its path through the network.
Let 
$S_{i}^{(p)}:=1-\bar{s}_i^{k_p+1}$ denote  the success probability of any class $p$ job entering node $i$, which can be derived according to subsection \ref{sec:solve_trans}.

In some cases, one may know for a specific job the full path it will traverse in the network. In that case the job completion probability  will be the product of its success rates over  the hops it traverses. 
Let $C^{(p)}(a_1, ..., a_n)$  denote  the successful completion probability of a job traversing nodes $a_1, ..., a_n$ and leaving the system.
So, we  get: 
$C^{(p)}(a_1, ..., a_n) = 
\prod_{i=2}^{n}S_{a_i}^{(p)}.$

In other cases, the full path might not be known in advance. In this case, let  $C_i^{(p)}$ denote  the (network) successful completion probability of a class $p$ job that starts at node $i$. This can be derived recursively from  the probability to leave the system at $i$ and from  the  completion probabilities of  the downstream neighbors of $i$. 
To this end, let $\hat{S}^{(p)}_i := 1-\sum_{j=1}^{N}f_{i,j}^{(p)}$ denote the fraction of class $p$ jobs that leave the system at node $i$. Using the definition $\hat{f}_{j, i}^{(p)}=f_{i, j}^{(p)} S_j^{(p)}$, we get:
\begin{equation}
\label{eq:completion}  
    C_i^{(p)} = \hat{S}^{(p)}_i + \sum_{j=1}^{N}f_{i,j}^{(p)}S_j^{(p)}C_j^{(p)}=\hat{S}^{(p)}_i + \sum_{j=1}^{N}\hat{f}_{j,i}^{(p)}C_j^{(p)}
\end{equation}
Now, using a vector terminology, where $C$ and $\hat{S}$ are treated as vectors of size $NP$, we get $C = \hat{S} + C\hat{F}$.
Assuming the original network is solvable, eq. \eqref{eq:completion} is solvable and given by 
\begin{equation}
C = \hat{S} + C\hat{F}=\hat{S}(I-\hat{F})^{-1}=\hat{S}\sum_{i=0}^{\infty}\hat{F}^i
\end{equation}

Finally, we derive the \textit{total completed-transactions rate}:
\begin{equation}
    \Gamma_p := \sum_{i=1}^N\gamma_i^{(p)}S_i^{(p)}C_i^{(p)} = \sum_{i=1}^N\varphi(\lambda)^{(p)}_i\hat{S}_i^{(p)}
\end{equation}
Where the last equality is because for each class $p$ the total rate of jobs arriving on the system and completing successfully must be equal to the total rate of jobs leaving the system.

\subsection{System-Wide Monotonicity of \textit{DiffTry}}
\label{sec:dominance}

\textit{DiffTry} prioritizes class $p$ by granting it more retries; in Sec. \ref{sec:evaluation}  this was shown to be rather effective for a single node. The question addressed here is whether it is effective in the context of a full network of services where the retry mechanism of a node affects (through cyclical flows) its input streams in an unknown way.  We establish this effectiveness by establishing  the monotonicity of the network converged  solution (node input rates) in the retry budgets.

\begin{definition}[$p$-dominance]
For $v,w\in\mathbb{R}^{NP}$ and a class $p$, $v$ \emph{$p$-dominates} $w$ (written
$v\ge_p w$) if $v_i^{(p)}\ge w_i^{(p)}$ and $v_i^{(q)}\le w_i^{(q)}$ for all nodes
$i$ and classes $q\ne p$. A map $f$ is \emph{dominance preserving} if for every $p$
$v\ge_p w\Rightarrow f(v)\ge_p f(w)$. 
\end{definition}
Through this subsection, a superscript $*$ marks a quantity evaluated
at a fixed point.
\begin{figure*}[t]
    \centering
    \begin{minipage}[b]{0.245\textwidth}\centering
        \includegraphics[width=\linewidth]{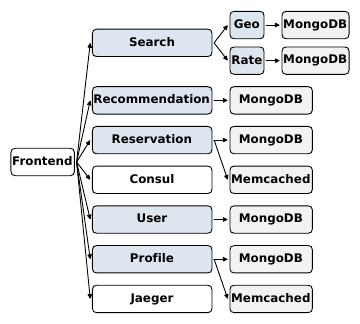    }\\[2pt]
        {\small\textbf{(a) DeathStarBench Dependency graph (from \cite{deathstarbench_github}})}
    \end{minipage}\hfill
    \begin{minipage}[b]{0.245\textwidth}\centering
        \includegraphics[width=\linewidth]{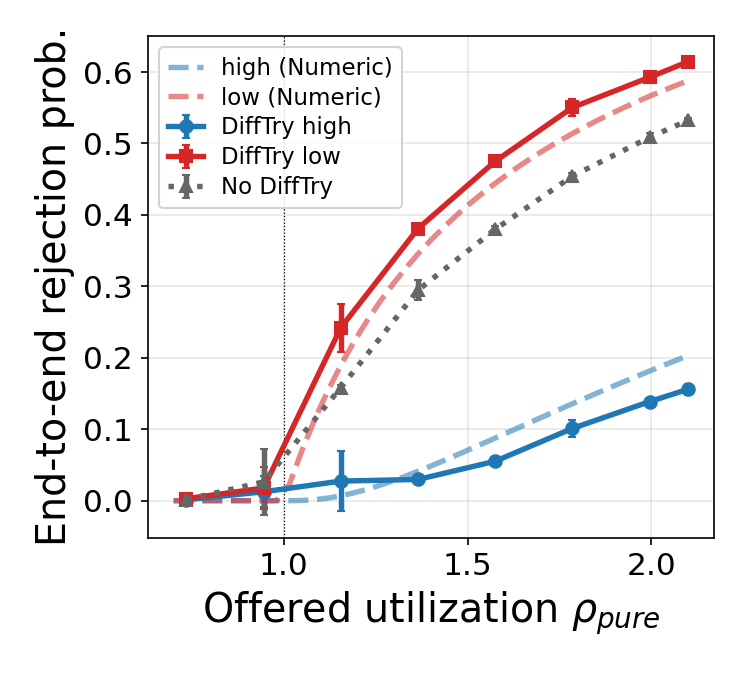}\\[2pt]
        {\small\textbf{(b) Per-class network rejection: numerical VS experiment}}
    \end{minipage}\hfill
    \begin{minipage}[b]{0.245\textwidth}\centering
        \includegraphics[width=\linewidth]{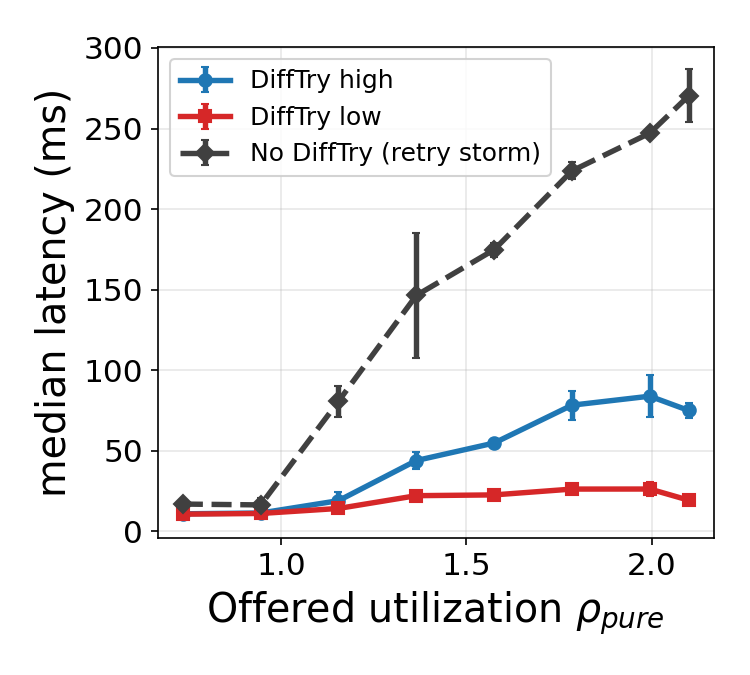}\\[2pt]
        {\small\textbf{(c) Median latency: Protection against retry Storm }}
    \end{minipage}\hfill
    \begin{minipage}[b]{0.245\textwidth}\centering
        \includegraphics[width=\linewidth]{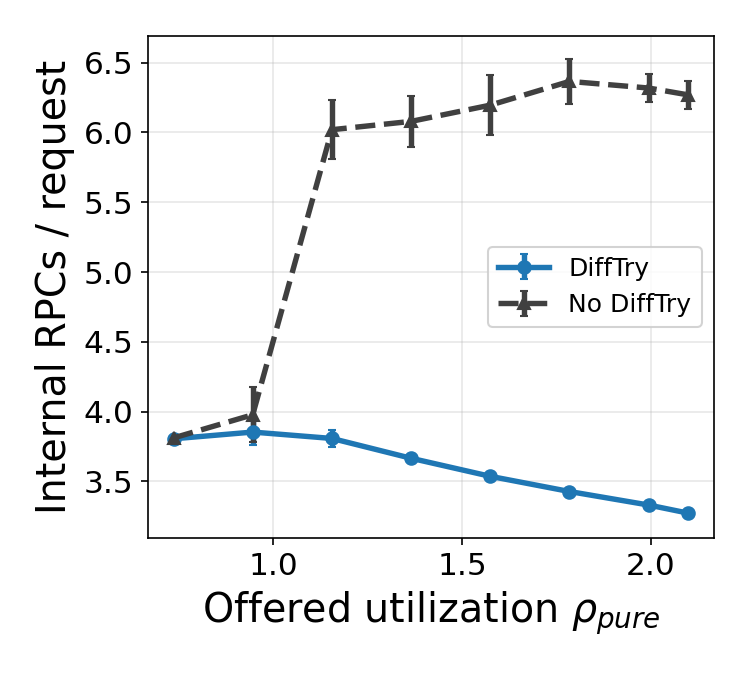}\\[2pt]
        {\small\textbf{(d) Internal RPC calls per external request}}
    \end{minipage}
    \caption{\textit{DiffTry} on live kubernetes deployment (DeathStarBench hotelReservation) - \textit{DiffTry} protects high-priority traffic, suppresses the flood's
    retry-storm latency, and improves efficiency}
    \label{fig:deathstar}
\end{figure*}
\begin{theorem}[System-wide monotonicity of fixed-points in the retry budgets]
\label{thm:dominance}
fix a class $p$ and budgets $k^{\mathrm{org}}{\le_p} k^{\mathrm{dom}}$. 
When the two systems run from any initial loads
$\lambda^{\mathrm{org}}(0){\le_p}\lambda^{\mathrm{dom}}(0)$ they converge to fixed points
satisfying:
\begin{equation}
\lambda^*_{\mathrm{dom}}\ge_p\lambda^*_{\mathrm{org}},\quad
\varphi^*_{\mathrm{dom}}\ge_p\varphi^*_{\mathrm{org}},\quad
\Gamma^*_{\mathrm{dom}}\ge_p\Gamma^*_{\mathrm{org}},
\label{eq:dom}
\end{equation}
giving consistent dominance of the prioritized system over the original, even as the load shifts between stages.
\end{theorem}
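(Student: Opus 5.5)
The plan is a monotone-iteration (comparison) argument on the fixed-point map $T_k(\lambda)=\gamma+\varphi_k(\lambda)F$ of the proof of Thm.~\ref{thm:uniqueness}, where the subscript $k$ records the retry budgets. I would show two properties and then induct on the iteration index $t$:
\begin{enumerate}
\item for fixed budgets, $T_k$ is dominance preserving;
\item for a fixed load vector $\lambda$, $\varphi_{k^{\mathrm{dom}}}(\lambda)\ge_p\varphi_{k^{\mathrm{org}}}(\lambda)$.
\end{enumerate}
Given these, if $\lambda^{\mathrm{dom}}(t)\ge_p\lambda^{\mathrm{org}}(t)$ then
\[
T_{k^{\mathrm{dom}}}(\lambda^{\mathrm{dom}}(t))\ge_p T_{k^{\mathrm{dom}}}(\lambda^{\mathrm{org}}(t))\ge_p T_{k^{\mathrm{org}}}(\lambda^{\mathrm{org}}(t)).
\]
This holds because $\ge_p$ is transitive and multiplication by the nonnegative $F$ preserves it. The step needs care, since $F$ may route class-$p$ mass only into class-$p$ coordinates. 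Routing in (\ref{eq:linear}) is indeed class-separated: $F$ is block-diagonal across classes, so $\varphi F$ acts classwise and preserves each sign constraint separately. Adding the common $\gamma$ also preserves it. The base case is the hypothesis $\lambda^{\mathrm{org}}(0)\le_p\lambda^{\mathrm{dom}}(0)$. Convergence of both sequences is given by Thm.~\ref{thm:convergence}, and the closedness of the cone defining $\ge_p$ passes the relation to the limits. This gives $\lambda^*_{\mathrm{dom}}\ge_p\lambda^*_{\mathrm{org}}$.

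The heart of the argument is the per-node monotonicity of $\varphi_i$, which I would read off eq.~(\ref{eq:s-equation}). Property (2) is easy. At a node, $\mu_i=\sum_q\lambda_i^{(q)}(1-\bar s_i^{k^{(q)}+1})$. Increasing $k^{(p)}$ makes the right-hand side larger at a fixed $\bar s_i$, and the right-hand side is decreasing in $\bar s_i$. Hence the solution $\bar s_i$ weakly increases. The outputs $\varphi^{(q)}=\lambda^{(q)}(1-\bar s_i^{k^{(q)}+1})$ then weakly decrease for $q\ne p$. For class $p$, the sum identity $\sum_q\varphi^{(q)}=\mu_i$ is invariant, so the class-$p$ output must absorb the loss of the others and weakly increase. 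In the no-overload region $\varphi$ is the identity and nothing changes.

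Property (1) is the main obstacle. Raising $\lambda_i^{(p)}$ and lowering $\lambda_i^{(q)}$ moves $\bar s_i$ in an a priori unclear direction, so I would argue through derivatives. Implicit differentiation gives
\[
\frac{\partial\varphi^{(q)}}{\partial\lambda^{(r)}}=-\lambda^{(q)}(k^{(q)}+1)\bar s^{k^{(q)}}\frac{\partial\bar s}{\partial\lambda^{(r)}}\le 0\quad\text{for } q\ne r,
\]
since $\partial\bar s/\partial\lambda^{(r)}\ge0$. Moreover $\sum_q\partial\varphi^{(q)}/\partial\lambda^{(r)}=0$ because the total output equals $\mu$, so the diagonal entries are nonnegative. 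A Jacobian with nonnegative diagonal, nonpositive off-diagonal entries and zero column sums is exactly the sign pattern that maps a $p$-dominance direction ($+$ on $p$, $-$ elsewhere) to a $p$-dominance direction. For the class-$p$ coordinate this follows because $\Delta\varphi^{(p)}=J_{pp}\Delta^{(p)}+\sum_{r\ne p}J_{pr}\Delta^{(r)}$, and both terms are nonnegative. For $q\ne p$, $\Delta\varphi^{(q)}=J_{qp}\Delta^{(p)}+J_{qq}\Delta^{(q)}+\sum J_{qr}\Delta^{(r)}$. The terms $J_{qp}\Delta^{(p)}$ and $J_{qq}\Delta^{(q)}$ are nonpositive, but the cross terms $J_{qr}\Delta^{(r)}$ with $r\ne p,q$ are nonnegative. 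These I expect to control via the explicit factored form $J_{qr}=-a_q c_r$, with $a_q=\lambda^{(q)}(k^{(q)}+1)\bar s^{k^{(q)}}$, which yields $J_{qq}=\delta_{q}-a_qc_q$ with $\delta_q=1-\bar s^{k^{(q)}+1}$. Rather than differentiate, I would compare $\bar s_i$ directly: if $\bar s$ weakly increases, then each $q\ne p$ loses on both counts. If $\bar s$ decreases, the total-output identity forces the class-$p$ output up and the others down anyway. Integrating along the segment, split at $\rho_i=1$ as in the proof of Thm.~\ref{thm:uniqueness}, handles the region change.

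Finally, $\varphi^*$ dominance follows by applying (1) and (2) at the fixed points. $\Gamma_p=\sum_i\varphi_i^{(p)}\hat S_i^{(p)}$ is a nonnegative combination of class-$p$ outputs, and likewise $\Gamma_q$ for classes $q$, so $\Gamma^*_{\mathrm{dom}}\ge_p\Gamma^*_{\mathrm{org}}$.
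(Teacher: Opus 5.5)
Your architecture is the paper's own outline: dominance preservation of $\varphi$ in $\lambda$ and in $k$, class-separated nonnegative routing, induction on $t$, closedness of $\ge_p$ under limits, and $\Gamma$ as a nonnegative combination. Your property (2) is correct for every $P$. The gap is property (1) when $P\ge 3$. In the branch of your case analysis where $\bar s_i$ decreases, the identity $\sum_q\varphi_i^{(q)}=\mu_i$ only shows that the \emph{aggregate} $\sum_{q\ne p}\varphi_i^{(q)}$ falls. But $\ge_p$ requires every $\varphi_i^{(q)}$ with $q\ne p$ to fall. The cross terms $J_{qr}\Delta^{(r)}$ you flagged cannot be controlled, because they really have the wrong sign. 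Lowering only $\lambda_i^{(r)}$ is a $p$-dominance move; it lowers $\bar s_i$, and so it strictly raises $\varphi_i^{(q)}=\lambda_i^{(q)}(1-\bar s_i^{k^{(q)}+1})$. For instance, take $P=3$, $\mu=1$ and all budgets equal to $1$. The loads $w=(1,1,1)$ and $v=(1,1,\tfrac12)$, ordered as $p,q,r$, satisfy $v\ge_p w$, yet $\varphi^{(q)}(w)=\tfrac13<0.4=\varphi^{(q)}(v)$. For the same reason, your claim that a Jacobian with that sign pattern maps $p$-dominance directions to $p$-dominance directions holds only for $P=2$.

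A better argument cannot fix this, because the statement itself fails for $P\ge 3$. Take a DAG in which node $1$ receives classes $p$ and $r$ externally, and node $2$ receives class $q$ externally plus all class-$r$ output of node $1$; everything else leaves. Set all external rates, capacities and budgets equal to $1$, and raise $k_1^{(p)}$ from $1$ to $3$. This increases $\bar s_1$ and cuts the class-$r$ inflow to node $2$ from $0.5$ to about $0.38$. It therefore raises $\varphi_2^{(q)}$, and hence $\Gamma_q$, from $\tfrac23$ to about $0.72$, violating $\varphi^*_{\mathrm{dom}}\ge_p\varphi^*_{\mathrm{org}}$ and $\Gamma^*_{\mathrm{dom}}\ge_p\Gamma^*_{\mathrm{org}}$. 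Routing that class-$q$ output into a third node breaks the $\lambda^*$ part as well. Since a DAG has a unique fixed point, no choice of initial loads rescues it. The paper's outline makes the same leap: its phrase ``allows class $p$ to claim a larger share at the expense of others'' is true only in aggregate. For $P=2$ the aggregate and individual statements coincide, and your two-case comparison of $\bar s_i$ proves property (1). The mixed-region cases also go through directly, using $\sum_q\lambda_i^{(q)}\le\mu_i$ on the non-overloaded side, and the rest of your proof is complete. So you should state and prove the result for $P=2$ rather than try to close the $P\ge 3$ case.
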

\begin{proof}[\textbf{Outline of Proof}]
We show each part preserves dominance: the node map
$\varphi$, the single iteration $T$, and its fixed point.

\emph{Single node.} 
An overloaded node shares a fixed rate of successful jobs, $\mu$, across all classes. Increasing retries (or input) allows class $p$ to claim a larger share at the expense of others; thus, $\varphi$ preserves dominance in $k$ and $\lambda$.

\emph{Single iteration.} $T$ mixes these accepted rates by routing, and since a (non-negative) mixture of $\ge_p$-ordered vectors stays
$\ge_p$-ordered, then  $T$ preserves dominance in both $k$ and $\lambda$.

\emph{Fixed points.} 
Start the two iterations from $\lambda^{\mathrm{org}}(0)\le_p\lambda^{\mathrm{dom}}(0)$, each iterative step preserves $p$-dominance (apply the single iteration step in $\lambda$, then in
$k$), so $\lambda^{\mathrm{dom}}(t)\ge_p\lambda^{\mathrm{org}}(t)$ for all $t$. Being
coordinatewise, the ordering survives to the limit, giving dominance in $\lambda^*$ and $\varphi^*$.

\emph{Completion rate.} $\Gamma^*_r=\sum_i\varphi^{*(r)}_i\hat S^{(r)}_i$ is a
nonnegative-weighted sum of the accepted rates, with weights
$\hat S^{(r)}_i=1-\sum_j f^{(r)}_{i,j}$ fixed by routing alone. The ordering on
$\varphi^*$ therefore carries through to $\Gamma^*_{\mathrm{dom}}\ge_p\Gamma^*_{\mathrm{org}}$.
\end{proof}

Starting both iterations at the same point $\lambda^*_{\mathrm{dom}}$ or $\lambda^*_{\mathrm{org}}$ gives:
\begin{corollary}
    For every fixed point
$\lambda^*_{\mathrm{org}}$ of $k^{\mathrm{org}}$ there exists a fixed point
$\lambda^*_{\mathrm{dom}}$ of $k^{\mathrm{dom}}$, satisfying \eqref{eq:dom},
and symmetrically with the roles reversed. In particular, if either system has a
unique fixed point, \eqref{eq:dom} holds for \emph{every} fixed point of the other.
\end{corollary}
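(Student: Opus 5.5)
The corollary follows from Theorem~\ref{thm:dominance} by choosing the initial loads of the two iterations to coincide at an existing fixed point. Fix a fixed point $\lambda^*_{\mathrm{org}}$ of the map $T_{\mathrm{org}}(\lambda)=\gamma+\varphi_{\mathrm{org}}(\lambda)F$ associated with budgets $k^{\mathrm{org}}$. Run both systems from
\[
\lambda^{\mathrm{org}}(0)=\lambda^{\mathrm{dom}}(0)=\lambda^*_{\mathrm{org}}.
\]
Trivially $\lambda^{\mathrm{org}}(0)\le_p\lambda^{\mathrm{dom}}(0)$, since $\le_p$ is reflexive (equality in every coordinate satisfies both inequality directions).

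\textbf{Step 1.} The original iteration never moves: $\lambda^{\mathrm{org}}(t)=\lambda^*_{\mathrm{org}}$ for all $t$, so its limit is $\lambda^*_{\mathrm{org}}$ itself.

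\textbf{Step 2.} Theorem~\ref{thm:convergence} (with the convergence part of Theorem~\ref{thm:dominance}) says the dominated iteration $\lambda^{\mathrm{dom}}(t)$ converges to some limit. Because $T_{\mathrm{dom}}$ is continuous, that limit is a fixed point, and we call it $\lambda^*_{\mathrm{dom}}$. Continuity holds because $\bar s_i$ depends continuously on $\lambda_i$ via eq.~\eqref{eq:s-equation} or the combined $\hat s$ in eq.~\eqref{eq:combined_sbar}, so $\varphi$ is continuous.

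\textbf{Step 3.} Theorem~\ref{thm:dominance} then yields \eqref{eq:dom} between $\lambda^*_{\mathrm{dom}}$ and $\lambda^*_{\mathrm{org}}$. This also covers the $\varphi^*$ and $\Gamma^*$ parts, since those are evaluated at the two limits.

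\textbf{Symmetric direction.} Start from a given fixed point $\lambda^*_{\mathrm{dom}}$ and initialize both iterations there. Now the dominated iteration is stationary, and the original iteration converges to some fixed point $\lambda^*_{\mathrm{org}}$ of $T_{\mathrm{org}}$. Theorem~\ref{thm:dominance} again applies and gives \eqref{eq:dom}.

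\textbf{``In particular'' clause.} Suppose, say, the dom system has a unique fixed point $\lambda^*_{\mathrm{dom}}$, and take any fixed point $\lambda^*_{\mathrm{org}}$ of the org system. The first construction produces some fixed point of $T_{\mathrm{dom}}$ satisfying \eqref{eq:dom} with $\lambda^*_{\mathrm{org}}$. By uniqueness it must be $\lambda^*_{\mathrm{dom}}$, so \eqref{eq:dom} holds for every fixed point of the other system. The case where the org system is unique is symmetric, using the second construction.

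\textbf{Main obstacle.} The delicate step is Step~2: we need the iteration started at an arbitrary point to actually converge to a fixed point, not merely to stay bounded. Theorem~\ref{thm:convergence} only bounds the iteration by the linear Neumann iteration, which gives boundedness but not convergence. I would first try to get convergence from monotonicity, starting from $\lambda^*_{\mathrm{org}}$ and using the $\le_p$-order-preservation of $T_{\mathrm{dom}}$ established in the proof outline of Theorem~\ref{thm:dominance}.

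Concretely, $\lambda^{\mathrm{dom}}(1)=T_{\mathrm{dom}}(\lambda^*_{\mathrm{org}})\ge_p T_{\mathrm{org}}(\lambda^*_{\mathrm{org}})=\lambda^*_{\mathrm{org}}$, where the inequality is monotonicity in $k$. Applying $T_{\mathrm{dom}}$ repeatedly, and using its monotonicity in $\lambda$, gives $\lambda^{\mathrm{dom}}(t+1)\ge_p\lambda^{\mathrm{dom}}(t)$ for all $t$. So the sequence is monotone in the $\le_p$ order: class-$p$ coordinates are nondecreasing and the other classes' coordinates are nonincreasing. It is also bounded, above by the Neumann solution $\gamma(I-F)^{-1}$ and below by $0$ (indeed by $\gamma$). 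A bounded coordinatewise-monotone sequence converges. Its limit is a fixed point by continuity of $T_{\mathrm{dom}}$, which gives Step~2 rigorously. The symmetric direction works the same way with the order reversed.
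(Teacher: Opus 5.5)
Your proof is correct and follows the paper's own argument. You initialize both iterations at the given fixed point, so one system is stationary and $\le_p$ holds trivially, then invoke Theorem~\ref{thm:dominance}, and use uniqueness to identify the resulting fixed point in the ``in particular'' clause. Your final paragraph adds something the paper does not supply: the paper only asserts that the iterations converge, and Theorem~\ref{thm:convergence}'s bound $0\le\tilde F(t)\le F$ gives boundedness, not convergence. Your observation that $T_{\mathrm{dom}}(\lambda^*_{\mathrm{org}})\ge_p\lambda^*_{\mathrm{org}}$ makes the iterates $\ge_p$-monotone and bounded, so each coordinate converges, and by continuity the limit is a fixed point.
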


\begin{remark}[Effective prioritization in networks]
Since a class-equal budget has a unique fixed point, Thm.~\ref{thm:dominance}
compares any \textit{DiffTry} equilibrium against that baseline: prioritizing class
$p$ raises its per-node successful transactions and its total completion rate
$\Gamma_p$ relative to no prioritization, even when the \textit{DiffTry} equations
admit multiple fixed points.
\end{remark}

\section{Network-Level Evaluation and Experimental Deployment in a Large Kubernetes Network}
\label{sec:exp}\label{subsec:eval_deathstar}

To examine how \textit{DiffTry} yields \emph{end-to-end} differentiation in realistic networks, we evaluate it on \emph{DeathStarBench}
\cite{deathstarbench_github, DeathStarBench}, a widely used microservice benchmark, using its
\emph{hotelReservation} application
(Fig.~\ref{fig:deathstar}(a)). We study \textit{DiffTry} in two ways: (i) \emph{Numerically}, by solving the network model of
Sec.~\ref{sec:network}, and (ii)
\emph{Experimentally}, on a live Kubernetes deployment with service mesh. 
We ask two questions: Does \textit{DiffTry}'s local, per-node mechanism compose into a \emph{network-wide} prioritization, and how well does the analysis predict a real system?

\subsection{Experiment and numerical examination Setup}
We consider two classes ($P{=}2$), high- and low-priority; external arrivals enter
at the gateway/front-end. We fix $\gamma^{(1)}{=}50$ and, to model an attacker that
floods the network (DDoS) with cheap low-priority requests, sweep
$\gamma^{(2)}\!\in\![50,250]$ (at $250$ the total offered load reaches
$\rho\!\approx\!2.1$). Service capacities $\mu_i$
are calibrated so that, without rejection, node utilization is
$0.7$.\footnote{ We also tested several utilization values, and higher maximum loads; all gave similar outcomes. Capacity is enforced by rate limiters.} We
compare
\textit{DiffTry} with $k{=}(5,1)$ against a no-prioritization baseline with symmetric
$k{=}(5,5)$. The code is available \cite{difftry_code}.

\emph{Numerically}, we solve the network fixed point (Sec.~\ref{sec:network})
for each $\gamma^{(2)}$ and report the network-level rejection $1{-}C_1^{(p)}$ of a
class-$p$ request entering the front end. 

\emph{Experimentally}, we deploy
hotelReservation on Kubernetes with the \emph{Istio}/Envoy service mesh
\cite{istio_traffic_managment}, representative of the service-mesh platforms, which are now
widely adopted in production microservice deployments \cite{cncf_servicemesh_survey}.
We realize \textit{DiffTry} entirely as a \emph{service-mesh plugin}: a small,
stateless Envoy filter that, at every sidecar, reads the request's priority (carried
in a propagated context header) and sets that hop's retry budget to the class value.
Crucially, this is a \emph{drop-in} plugin that any existing mesh can adopt with
\emph{no changes to application code}. Although the theoretical model relies on probabilistic routing, our physical experiment uses the application's native fan-out.

\subsection{Experiment and numerical examination results} 
\subsubsection{End-to-end rejection rate (Fig.~\ref{fig:deathstar}(b))} 
Without prioritization, a high-value request is rejected almost as heavily as the
attacking flood ($\approx\!54\%$ under overload). \textit{DiffTry} reduces
the high-priority rejection rate by an order of magnitude under moderate
overload ($\sim\!10\times$ at $\rho\!\approx\!1.4$), and still by $\sim\!3.5\times$ at the
heaviest flood ($\rho\!\approx\!2.1$, from $\approx\!54\%$ to $\approx\!15\%$), while the
rejection of the low-priority flood rises only slightly (to $\approx\!61\%$).
\textit{DiffTry} thus induces a clear and persistent end-to-end differentiation between
the classes.

Crucially, the \emph{measured} rejection on the live experiment tracks the
\emph{numerical} fixed-point prediction
($\approx0.03$ difference), confirming that the local per-node mechanism
manifests as a network-wide effect on a real system that the analysis predicts well. Below capacity ($\rho<1$) both classes see $\approx\!0$ rejection, so
\textit{DiffTry} is silent when the system is not overloaded, consistent with the
single-service results of Sec.~\ref{sec:evaluation}.

\subsubsection{Latency and resource consumption (Fig.~\ref{fig:deathstar}(c,d))}
Without prioritization, the uniform retries back the over-offered flood up into a
\emph{retry storm} whose median latency reaches $\approx\!270$\,ms. \textit{DiffTry} caps
the low-priority budget so the flood fails fast, collapsing its latency to
$\approx\!20$\,ms, and the high-priority latency to $\approx\!70$\,ms
($\sim\!3.6\times$ lower than baseline) -- the intrinsic cost of its $k{=}5$ retries. Thus \textit{DiffTry} \emph{drastically decreases} the latency of \emph{both}
classes far below the baseline.

\textit{DiffTry} adds negligible overhead---a single, stateless retry-budget filter---and
instead \emph{reduces} the mesh's internal work. Each external request fans out to
$\approx\!3.8$ internal gRPC calls; under overload the uniform retries inflate this to
$\approx\!6.4$ RPCs per request, whereas \textit{DiffTry} drives it \emph{below} the
fan-out floor, to $\approx\!3.3$: the flood fails fast at
the first congested service and never issues the deeper downstream calls. Per
\emph{successfully served} request the gap widens further ($\approx\!13.8$ vs
$\approx\!7.2$). 
Because rejection is enforced by rate limiters, before a request reaches the service logic, CPU and memory stay comparable.

\subsubsection{Conclusions}
These results reproduce, on a real deployment, the prioritization and protection
predicted by the analysis. Equally important, they show that \textit{DiffTry} is
trivial to adopt using a simple low-overhead service-mesh plug-in.

\section{Conclusions}\label{sec:conclusion}

We introduced \textit{DiffTry}, a distributed, stateless mechanism that embeds priority into retry logic, turning retries into a tool for differentiating request rejections under overload without explicit overload signaling or inter-service coordination.

We developed an analytical framework that lets practitioners quantify the prioritization behavior and rejection rates. We proved convergence of its solution and established conditions and evidence for uniqueness, including DAG service graphs. We further proved monotonicity of class-level success rates in the retry limits, establishing \textit{DiffTry}'s effectiveness and performance bounds even without guaranteed uniqueness.

We validated the analysis both numerically at the network level and on a live Kubernetes/Istio deployment, where \textit{DiffTry} is a drop-in service-mesh plugin whose measured per-class rejection matches the numerical prediction, requires no changes to application code, and incurs negligible overhead.

\bibliographystyle{IEEEtran}
\bibliography{References}
\end{document}